
\documentclass[12pt]{article}%

\usepackage[all]{xy}

\usepackage{amsmath}
\usepackage{amsfonts}
\usepackage{amssymb}
\usepackage{graphicx}
\usepackage{pstricks}
\setcounter{MaxMatrixCols}{30}
\providecommand{\U}[1]{\protect\rule{.1in}{.1in}}

\newcommand{\N}{{\mathbb N}}
\newcommand{\C}{{\mathbb C}}

\newcommand{\R}{{\mathbb R}}

\newcommand{\Ho}{{\mathbb H}}

\newcommand{\Hh}{{\mathcal H}}
\newcommand{\Aa}{{\mathcal A}}
\newcommand{\Cc}{{\mathcal C}}

\newcommand{\Ss}{{\mathcal S}}

\newcommand{\AM}{{\mathcal{AM}}}
\newcommand{\SM}{{\mathcal{SM}}}

\def\ov{\overline}

\def\bi{\begin{itemize}}
\def\ei{\end{itemize}}
\def\un{\underline}

\newcommand{\ba}{\begin{eqnarray}}
\newcommand{\ea}{\end{eqnarray}}
\newcommand{\bas}{\begin{eqnarray*}}
\newcommand{\eas}{\end{eqnarray*}}
\newcommand{\be}{\begin{equation}}
\newcommand{\ee}{\end{equation}}

\def\dxu{\partial_{\un x}}
\def\dxv{\partial_{\vec x}}
\def\dxj{\partial_{x_j}}
\def\dx0{\partial_{x_0}}
\def\pa{\partial}

\textwidth 6.5in
\textheight 9.2in
\evensidemargin 0in
\oddsidemargin 0in
\topmargin -0.2in
\headsep 0in
\headheight 0in
\footskip .5in
\tolerance=200
\setlength{\emergencystretch}{2em}

\newtheorem{theorem}{Theorem}
\newtheorem{proposition}[theorem]{Proposition}

\newtheorem{lemma}[theorem]{Lemma}

\newtheorem{definition}[theorem]{Definition}
\newenvironment{proof}[1][Proof]{\noindent\textbf{#1.} }{\ \rule{0.5em}{0.5em}}
\newtheorem{preremark}[theorem]{Remark}
\newenvironment{remark}{\begin{preremark}\rm}{\hfill$\Diamond$\end{preremark}}
\newtheorem{prenotation}[theorem]{Notation}

\numberwithin{equation}{section}
\numberwithin{theorem}{section}

\begin{document}

\title{{Extending coherent state transforms to Clifford analysis}}
\author{William D. Kirwin\thanks{Mathematics Institute, University of Cologne}, \, Jos\'e  Mour\~ao\thanks{Department of Mathematics and Center for Mathematical Analysis, Geometry and Dynamical Systems, Instituto Superior T\'ecnico, University of  Lisbon}, \, Jo\~ao P. Nunes\footnotemark[2] \, and Tao Qian\thanks{Department of Mathematics, Faculty of Science and Technology, University of Macau}}

\date{\vspace{-5ex}} 

\maketitle


\begin{abstract}
 Segal-Bargmann coherent state transforms can be viewed as unitary maps from
$L^2$ spaces of functions (or sections of an appropriate line bundle) 
on a manifold $X$ to spaces of
square integrable holomorphic functions (or sections) on $X_\C$.
It is natural to consider higher dimensional extensions of 
$X$ based on Clifford algebras as they could be useful
in studying quantum systems with internal, discrete, degrees
of freedom corresponding to nonzero spins. Notice that 
  extensions of 
$X$ based on the Grassman algebra appear naturally in the 
study of supersymmetric quantum mechanics. In Clifford analysis the zero mass 
Dirac equation provides a natural generalization of the Cauchy-Riemann conditions
of complex analysis and leads to monogenic functions. 

For the simplest but already quite interesting case of
$X = \R$ we introduce two extensions of the Segal-Bargmann coherent state transform
from $L^2(\R,dx) \otimes \R_m$ to Hilbert spaces of slice monogenic and axial monogenic functions
and study their properties. These two transforms are related by the dual Radon transform. Representation theoretic and quantum mechanical aspects of the 
new representations are studied.

\end{abstract}

\tableofcontents

\section{Introduction}

Clifford analysis (see \cite{BDS,DSS}) has been developed to extend the complex analysis of holomorphic functions to
  Clifford algebra valued functions, satisfying properties generalizing the Cauchy--Riemann conditions.

On the other hand, in quantum physics,  Clifford algebra or spinor representation valued functions 
describe some systems
with internal degrees of freedom, such as particles with spin.

Recall that the Segal-Bargmann transform \cite{Ba,Se1,Se2},
for a particle on $\R$,
establishes the unitary equivalence of  the Schr\"odinger representation
 with Hilbert space $L^2(\R, dx)$, with
(Fock space-like)  representations with Hilbert spaces,
$\Hh L^2(\C, d\nu)$,
of holomorphic functions
on the phase space, $\R^2\cong \C$ which are $L^2$ with respect to a measure $\nu$. In the Schr\"odinger 
representation the position operator $\hat x_{Sch}$ acts diagonally while the momentum operator is $\hat p_{Sch}=i\frac{d}{dx}$.
On the other hand, on the Segal--Bargmann Hilbert space 
$\Hh L^2(\C, d\nu)$ it is the operator $\hat x_{SB} + i\hat p_{SB}$ that acts as multiplication by the holomorphic function 
$x+ip$. 

In \cite{Ha1}, Hall has 
defined coherent state transforms (CSTs) for compact Lie groups $G$ which are analogs of the Segal-Bargmann transform.  
These CSTs correspond to applying heat
kernel evolution,  $e^{\frac{\Delta}2}$, followed by analytic continuation
to the complexification $G_\C$ of $G$ \cite{Ha2}.

We use the
fact that, after applying the heat kernel evolution,
the resulting functions are in fact extendable 
to $\R^{m+1}$ in two natural ways motivated by Clifford analysis.  These will lead to two
generalizations of the CST, the slice monogenic CST, $U_s$, and the axial monogenic CST, $U_a$, which take 
values on spaces of $\C_m$--valued functions on $\R^{m+1}$, where $\C_m$ denotes the
complex Clifford algebra with $m$ generators.
One, ${\mathcal H}_s= {\rm Im}\, U_s$, is a subspace of the recently introduced space of 
square integrable  slice monogenic functions \cite{CSS1}, while the other, ${\mathcal H}_a= {\rm Im}\, U_a$, is a Hilbert space of, 
the more
traditional in Clifford analysis, axial 
monogenic functions \cite{BDS, DSS}.
We show that the two coherent state transforms  are related by the 
dual Radon transform $\check R$, 
$$
U_a = \check R \circ U_s.
$$

A possibly interesting alternative way of defining a monogenic CST would be through Fueter's theorem 
\cite{F,Q,KQS,PQS,Sc}. It would be very interesting to relate such a transform with the one
studied in the present paper.

As in the case of the usual CST, the aim of these transforms is to
describe the quantum states of a particle in $\R$ with internal degrees of freedom 
parametrized by a Clifford algebra, through  slice/axial monogenic functions, thus making available,
the powerful analytic machinery  of Clifford analysis.
In Section \ref{s-4}, we show that the operator $\hat x_0 + i\hat p_0$ has a simple action in both the slice 
and axial monogenic representations.

\section{Preliminaries}
\label{s-2}

\subsection{Coherent state transforms (CST)}
\label{cst}

Let $G$ be a compact Lie group with complexification $G_\C$. 
In 1994, Hall  \cite{Ha1}  introduced a class of unitary integral transforms on $L^2(G,dx)$, 
where $dx$ is Haar measure, to spaces of holomorphic functions on $G_\C$ which are $L^2$ with 
respect to an appropriate measure. These are known as coherent state transforms (CSTs) 
or generalized Segal--Bargmann transforms. 
These transforms were extended to groups of compact type, which include the case of
$G=\R^n$ considered in the present paper, by Driver in \cite{Dr}. General Lie
groups of compact type are direct products of compact Lie groups and $\R^n$, see Corollary 2.2 of \cite{Dr}.
{}For $G=\R^n$  these transforms coincide with the classical
Segal--Bargmann transform \cite{Ba, Se1, Se2}.

We will briefly recall now
the case $G=\R$, which we will extend to the context of Clifford
analysis in the present paper. The case of arbitrary groups of
compact type is very interesting and will be studied in a forthcoming work.
Let $\rho_t(x)$  denote the fundamental solution of the
heat equation.
$$
\frac{\partial}{\partial t} \rho_t =  \frac 12 \Delta \, \rho_t ,
$$
i.e.
$$
\rho_t(x) = \frac{1}{(2\pi t)^{1/2}} \, e^{-\frac{x^2}{2t}},
$$
where $\Delta$ is the Laplacian for the Euclidian metric and
let $\Hh(\C)$ denote the space of entire holomorphic
functions on $\C$.
The Segal--Bargman or coherent state transform
$$
U \ : L^2(\R, dx) \longrightarrow \Hh(\C)
$$
is defined by
\ba
\label{ee-cst} \nonumber
U(f) (z) &=& \int_{\R} \, \rho_{1}(z - x) f(x) \, dx = \\
&=& \frac{1}{(2\pi )^{1/2}} \, \int_{\R}  \, e^{-\frac{(z-x)^2}{2}} \, f(x)
\, dx \, .
\ea
where $\rho_1$ has been analytically continued to $\C$.
We see that the transform $U$ in (\ref{ee-cst}) factorizes according
to the following diagram

\vskip 0.3cm

\begin{align}
 \label{d1}
\begin{gathered}
\xymatrix{
&&{\mathcal H} (\C)  \\
L^2(\R, dx)  \ar@{^{(}->}[rr]_{e^{\frac{\Delta}2}}  \ar@{^{(}->}[rru]^{U} && \Aa (\R)
\ar[u]_{\Cc}
  }
\end{gathered}
\end{align}

\noindent 
where $\Aa(\R)$ is the space of (complex valued) 
real analytic functions on $\R$ with unique analytic continuation to entire functions  on $\C$, $\Cc$ denotes the  analytic continuation from $\R$ to $\C$
and $e^{\frac{\Delta}2}(f)$ is the (real analytic) heat kernel evolution of the function $f\in L^2(\R,dx)$ at time $t=1$,
that is the solution of
\be
\label{ee-he}
\left\{
\begin{array}{lll}
\frac{\partial}{\partial t} \, h_t &=& \frac 12 \Delta \, h_t\\
h_0 &=& f
\end{array}   ,
\right.
\ee
evaluated at time $t=1$,
$$
e^{\frac{\Delta}2}(f) = h_1.
$$
Let $\widetilde \Aa (\R)\subset \Aa(\R)$  denote the image of $L^2(\R,dx)$  by the operator
$e^{\frac{\Delta}2}$. 

\begin{theorem} [Segal--Bargmann]
\label{th-cst}
The transform  (\ref{ee-cst})
\begin{align}
 \label{d2}
\begin{gathered}
\xymatrix{
&&  \Hh L^2(\C, \nu \,  dx dy)   \\
L^2(\R, dx)  \ar[rr]_{e^{\frac{\Delta}2}}  \ar[rru]^{U} &&
\widetilde \Aa (\R)
\ar[u]_{\Cc}
  }
\end{gathered}
\end{align}
 is a unitary isomorphism,
where $z = x + i y \in \C, x, y \in \R$ and
$\nu(y) = e^{-y^2}$.
\end{theorem}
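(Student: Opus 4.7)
The plan is to establish three separate facts in order: (i) $U(f)$ lies in $\Hh(\C)$ for every $f \in L^2(\R, dx)$; (ii) the diagram (\ref{d2}) commutes, which amounts to matching the integral formula for $U$ with the analytic continuation of the heat kernel evolution; and (iii) $U$ is a unitary isomorphism onto $\Hh L^2(\C, \nu\, dx\, dy)$.

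For (i), I would differentiate under the integral sign: since the kernel $e^{-(z-x)^2/2}$ is entire in $z$ and decays like a Gaussian in $x$ uniformly on compact subsets of $\C$, the standard dominated convergence argument together with Cauchy--Schwarz against $f \in L^2(\R)$ gives holomorphy. For (ii), the heat evolution at $t=1$ is given by convolution with $\rho_1$, so
\[
h_1(x) = \int_\R \rho_1(x - x')\, f(x')\, dx'.
\]
Since $\rho_1(x - x')$ is entire in $x$ with Gaussian decay in $x'$, this expression makes sense for $x$ replaced by any $z \in \C$, and by uniqueness of analytic continuation of the real-analytic function $h_1$ this extension equals $\Cc(h_1)(z) = U(f)(z)$.

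The crux is (iii), the isometry and surjectivity. For isometry, I would compute $\|U(f)\|^2_{\Hh L^2}$ by Fubini, writing $z = x + iy$, and expanding
\[
|U(f)(x+iy)|^2 = \frac{1}{2\pi}\int_\R\int_\R e^{-(x-x'-iy)^2/2}\, e^{-(x-x''+iy)^2/2}\,\overline{f(x')}\, f(x'')\, dx'\, dx''.
\]
After collecting real and imaginary parts in the exponent, the $y$-integral against $e^{-y^2}$ produces a Dirac mass in $x'-x''$ (this is precisely where the measure $\nu(y)=e^{-y^2}$ is forced), and the remaining $x$- and $x'$-integrations collapse to a constant multiple of $\|f\|^2_{L^2(\R)}$, pinning down the correct overall normalization. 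For surjectivity, the cleanest route is to test against an orthonormal basis: the Hermite functions $\{h_n\}_{n \ge 0}$ are mapped by $U$ to scalar multiples of the monomials $z^n$, which form an orthonormal basis of the Fock space $\Hh L^2(\C, \nu\, dx\, dy)$; equivalently, one can appeal to density of $\widetilde{\Aa}(\R)$ in $L^2(\R,dx)$ together with the isometry just established.

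The main obstacle is the isometry computation. The Gaussian bookkeeping -- in particular the verification that the $y$-Gaussian and the imaginary cross terms conspire to give exactly $\delta(x'-x'')$ -- is where care is required, and it is also the step that secretly encodes why $\nu(y) = e^{-y^2}$ is the right measure. Once isometry is in hand, the Hermite/monomial correspondence for surjectivity is essentially bookkeeping, and the commutativity of (\ref{d2}) together with density of the heat-evolved image $\widetilde{\Aa}(\R)$ gives the full unitary isomorphism claimed.
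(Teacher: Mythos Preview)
The paper does not give its own proof of this theorem: it is stated as the classical Segal--Bargmann result, attributed to \cite{Ba,Se1,Se2}, and used as background for the Clifford extensions that follow. So there is no argument in the paper to compare against.

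Your outline is a correct and standard route to the classical statement. One small clarification in step~(iii): the mechanism is not quite that ``the $y$-integral against $e^{-y^2}$ produces a Dirac mass.'' Writing $z=x+iy$, the analytic continuation of the heat kernel contributes a factor $e^{+y^2}$ to $|U(f)(x+iy)|^2$; the weight $\nu(y)=e^{-y^2}$ cancels this exactly (this is indeed what forces the choice of $\nu$), and it is the \emph{remaining} purely oscillatory integral $\int_\R e^{iy(x'-x'')}\,dy=2\pi\,\delta(x'-x'')$ that collapses the double $x',x''$ integral. Carrying the bookkeeping through gives $\|U(f)\|^2=\sqrt{\pi}\,\|f\|_{L^2}^2$ with the measure $e^{-y^2}\,dx\,dy$ as literally stated; this is consistent with the paper's higher-dimensional measure $d\nu_m$ in Theorem~\ref{th-scst}, which for $m=1$ carries the prefactor $\tfrac{1}{\sqrt{\pi}}$, so you should either absorb that constant into $\nu$ or record it explicitly. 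Your surjectivity argument via the Hermite--monomial correspondence is the usual one and is fine.
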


\subsection{Clifford analysis}
\label{ss-ca}

Clifford analysis has been developed to extend the complex analysis of holomorphic
functions 
  to
  Clifford algebra valued functions, satisfying properties generalizing the Cauchy--Riemann conditions \cite{BDS, DSS}. Let us briefly recall from \cite{CSS1} and \cite{DS},  some definitions and results from  Clifford analysis.
 Let $\R_m$ denote the real Clifford algebra with $m$ generators, $e_j, j = 1, \dots, m$, identified
with the canonical basis of $\R^m \subset \R_m$ and satisfying the relations $e_i e_j + e_je_i = - 2 \delta_{ij}$.
We have that $\R_m = \oplus_{k=0}^m \R_m^k$,
where $ \R_m^k$ denotes the space of $k$-vectors, defined by $\R_m^0 = \R$ and  $\R_m^k = {\rm span}_\R \{e_A \ : \ A \subset \{1, \dots , m\}, |A| = k\}$.  We see that, in particular, $\R^m$ is
identified with the space of $1$-vectors, $\R^m = \R_m^1$, $\un x = \sum_{j=1}^m x_j e_j$
and $\R^{m+1}$ is identified with the space, $\R_m^{\leq 1}$,  of paravectors of the form, $$\vec x = x_0 + \un x =
x_0 + \sum_{j=1}^m \, x_j e_j.$$
Notice also that $\R_1 \cong \C$ and $\R_2 \cong \Ho$.
The inner product in $\R_m$ is defined by
$$
<u, v> = <\sum_A u_A e_A ,  \sum_B v_B e_B> = \sum_A u_A v_A ,
$$
and therefore, $\un x^2 = - |\un x|^2 = - <x, x>.$
The Dirac operator is defined as
$$
\dxu = \sum_{j=1}^m \, \dxj e_j,
$$
and the Cauchy-Riemann operator as
\be
\nonumber
\dxv = \dx0 + \dxu .
\ee
We have that $\dxu^2 = - \sum_{j=1}^m  \frac{\partial^2}{\partial x_j^2}$ and
$\dxv \ov \dxv  = \sum_{j=0}^m  \frac{\partial^2}{\partial x_j^2}$.

Recall that a continuously  differentiable function $f$ on an open domain $U\subset \R^{m+1}$,  
with values on $\R_m$ or $\C_m= \R_m \otimes \C$, 
is called (left) monogenic on $U$  if (see, for example, \cite{BDS,DSS})
\be\nonumber
\dxv f(x_0, \un x)  = (\dx0 + \dxu) f(x_0, \un x) = 0.
\ee
For $m=1$, monogenic functions on $\R^2$ correspond to holomorphic
functions of the complex variable $x_0+e_1 x_1$.

\section{Monogenic extensions of analytic functions}

\subsection{Slice monogenic extension}
\label{ss-sme}

Recall from \cite{CSS1,CSS2} that   a function $f \ : \ U \subseteq \R^{m+1} \rightarrow \R_m$ 
is slice
monogenic if, for any unit vector $\un \omega \in S^{m-1}=\{\un x \in \R_m^1 \, : \, 
|\un x| = 1\}$, the restrictions $f_{\un \omega}$  of $f$ to the complex planes
$$
H_{\un \omega} = \left\{u+v \, \un \omega, \,\, u,v \in \R\right\}, 
$$ are holomorphic,
\be
\label{ee-sm}
(\pa_u + \un \omega \, \pa_v)f_{\un \omega} (u,  v) = 0, \, \,   \forall
\un \omega \in S^{m-1}.
\ee
Let $\SM(\R^{m+1})$
denote the space of slice monogenic functions on $\R^{m+1}$.
From the definition of ${\mathcal A}(\R)$ in diagram (\ref{d1}) and the Remark 3.4 of \cite{CSS3} 
(see also Proposition 2.7 in \cite{CSS2h}) one obtains the following.
\begin{theorem} 
\label{th-scke}
The slice-monogenic extension map,
\ba
\label{ee-sme}
\nonumber M_s \ : \ \Aa(\R)\otimes \R_m & \longrightarrow &   \SM(\R^{m+1})  \\
\nonumber M_s(h)(x_0, \un x) &=& M_s(\sum_A h_A \, e_A)(x_0, \un x) = \\
 &=& \sum_A h_A(x_0 + \un x) \, e_A := \sum_A e^{\un x \, \frac{d}{dx_0}} \, h_A(x_0) \, e_A =\\
&=& \nonumber \sum_A \sum_{k=0}^\infty \frac {\un x^k}{k!}  \, \frac{d^k h_A}{dx_0^k}(x_0) \, e_A  ,
\ea
is well defined and satisfies $M_s(h)(x_0, 0) = h(x_0), \forall x_0 \in \R$.
\end{theorem}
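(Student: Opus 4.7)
The plan is to verify three things in turn: (i) that the series defining $M_s(h)$ converges locally uniformly on $\R^{m+1}$ and therefore defines a smooth $\R_m$-valued function; (ii) that this function is slice monogenic; and (iii) that it restricts to $h$ on the real axis.

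For (i), I would use the assumption $h = \sum_A h_A e_A$ with each $h_A \in \Aa(\R)$ having a unique entire analytic continuation $h_A \colon \C \to \C$. Cauchy's estimates on a disk of radius $R$ centered at $x_0 \in \R$ give a bound of the form $\bigl|\tfrac{d^k h_A}{dx_0^k}(x_0)\bigr| \le k!\, M_R(x_0)/R^k$, where $M_R(x_0) = \max_{|z-x_0|=R}|h_A(z)|$. Combined with the identity $\un x^2 = -|\un x|^2$, which yields $|\un x^k| = |\un x|^k$ in $\R_m$, each term is bounded by $(|\un x|/R)^k M_R(x_0)$; taking $R > |\un x|$ gives absolute and locally uniform convergence, and the same estimate applied to termwise derivatives justifies differentiation under the sum.

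For (ii), fix $\un\omega \in S^{m-1}$ and parameterize $H_{\un\omega}$ by $(u,v)\mapsto u + v\un\omega$. Using $\un\omega^2 = -1$ repeatedly, apply $\pa_u + \un\omega\,\pa_v$ term by term to
\begin{equation*}
M_s(h)(u, v\un\omega) = \sum_A \sum_{k=0}^\infty \frac{v^k \un\omega^k}{k!}\, \frac{d^k h_A}{dx_0^k}(u)\, e_A.
\end{equation*}
The $\pa_u$ contribution is $\sum_A\sum_{k\ge 0}\tfrac{v^k\un\omega^k}{k!}\,h_A^{(k+1)}(u)\,e_A$, while $\un\omega\,\pa_v$ gives $\sum_A\sum_{k\ge 1}\tfrac{v^{k-1}\un\omega^{k+1}}{(k-1)!}\,h_A^{(k)}(u)\,e_A$; after the index shift $j=k-1$ and the reduction $\un\omega^{j+2}=-\un\omega^j$, this equals the negative of the first series. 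Hence \eqref{ee-sm} holds for every $\un\omega$, so $M_s(h) \in \SM(\R^{m+1})$. Conceptually, this is just saying that on each slice $H_{\un\omega}$ the series is the Taylor expansion of the entire function $h_A$ around $x_0$ with the complex unit $i$ replaced by $\un\omega$, so holomorphicity is built in.

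Finally, (iii) is immediate: setting $\un x = 0$ kills every term with $k \ge 1$ (since $\un x^k$ appears in the numerator), leaving $\sum_A h_A(x_0) e_A = h(x_0)$. The only real subtlety is the convergence/differentiation argument in step (i), which is the main technical point but is standard once one invokes Cauchy's estimates plus the Clifford identity $|\un x^k| = |\un x|^k$; the slice-monogenicity computation in (ii) is then a direct rearrangement, and matches the statements cited in Remark 3.4 of \cite{CSS3} and Proposition 2.7 of \cite{CSS2h}.
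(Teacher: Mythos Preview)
Your proof is correct. The paper does not actually give a proof of this theorem: it simply states that the result follows from the definition of $\Aa(\R)$ together with Remark~3.4 of \cite{CSS3} (and Proposition~2.7 of \cite{CSS2h}), and then states the theorem. Your argument is a direct, self-contained verification---Cauchy estimates for convergence, a termwise check of the slice Cauchy--Riemann condition using $\un\omega^2=-1$, and the obvious restriction to $\un x=0$---rather than an appeal to the literature. This buys you independence from the cited results at the cost of spelling out a routine power-series computation; the paper's route is shorter but relies on the reader accepting the external references.
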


\subsection{Axial monogenic extension and dual Radon transform}
\label{ss-ame}

A monogenic  function $f(x_0, \un x)$ is called axial monogenic (see \cite{DS}, p. 322,  
for the definition of axial monogenic functions of degree $k$) if it is
of the form
\ba
\label{ee-axm}
\nonumber f(x_0, \un x) &=&  \sum_A \, f_A(x_0, \un x) \, e_A \\
f_A(x_0, \un x) &=& B_A(x_0, |\un x|) + \frac{\un x}{|\un x|} \, C_A(x_0, |\un x|) \,,
\ea
where $B_A, C_A$ are scalar functions and the 
functions  $f_A$ are monogenic.
The monogeneicity condition, $\dxv f_A = \dx0 f_A + \dxu f_A =0$, then
leads to the Vekua--type system for $B_A, C_A$, generalising the
Cauchy-Riemann conditions,
$$
\dx0 B_A - \partial_r C_A = \frac{m-1}r \, C_A \, , \quad \dx0 C_A + \partial_r B_A = 0, \, r = |\un x|.
$$
 Let $\AM (\R^{m+1})$
denote the space of axial monogenic functions on $\R^{m+1}$.

Axial monogenic functions are determined by their restriction to the real axis,  $f(x_0, 0)$. 
The inverse map of extending (when such an extension exists)
a real analytic function  $h$  on $\R$
 to an axial monogenic function on $\R^{m+1}$
 is called generalized axial Cauchy-Kowalewski extension and 
has been studied by many authors (see, for example, \cite{DS}). 

Using the dual Radon transform to map slice monogenic 
functions to monogenic
functions as proposed in \cite{CLSS},
we will factorize the axial monogenic extension into the slice monogenic
extension followed by the dual Radon transform.
Let us first recall the definition of the 
dual Radon transform. (See, for example, \cite{He}.)

\begin{definition}
\label{de-drt}
The dual Radon transform of a smooth function $f$ on $\R^{m+1}$ is
\be
\label{ee-drt}
\check R ( f)(x_0, \un x) = \int_{S^{m-1}} \, f(x_0, < \un x, \un t> \un t) \, d \un t .
\ee
\end{definition}
It is known from \cite{CLSS} that $\check R$ maps entire slice monogenic functions
to entire monogenic functions.

Let us denote
a function $f\in \Aa(\R)$ and its analytic continuation to the complex plane $H_{\un t}$ by the same symbol, $f$. The following is a small modification of the Theorem 4.2 in \cite{DS}.

\begin{theorem}
\label{th-cke}
The axial monogenic or axial Cauchy-Kovalewski extension map
\ba
\label{ee-cke}
\nonumber M_a \ : \ \Aa(\R)\otimes \R_m & \longrightarrow &   \AM(\R^{m+1})  \\ \nonumber 
M_a(h)(x_0, \un x) &=&  M_a(\sum_A h_A e_A)(x_0, \un x) = \\
&=& 
\sum_A \, 
\int_{S^{m-1}} h_A (x_0 \, +  \, <\un x, \un t>\un t) \, d \un t    \, e_A ,
\ea
where $d \un t$ denotes the invariant normalized (probability) measure on $S^{m-1}$,
is well defined and satisfies $M_a(h)(x_0, 0) = h(x_0), \forall x_0 \in \R=\R_m^0$.
\end{theorem}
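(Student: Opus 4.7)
The plan is to identify $M_a$ as the composition $\check R \circ M_s$ of the slice monogenic extension of Theorem \ref{th-scke} with the dual Radon transform of Definition \ref{de-drt}. Indeed, from (\ref{ee-drt}) and (\ref{ee-sme}) one computes
\[
\check R(M_s(h))(x_0,\un x) = \int_{S^{m-1}} M_s(h)(x_0,\langle \un x,\un t\rangle \un t)\, d\un t = \sum_A \int_{S^{m-1}} h_A(x_0 + \langle \un x,\un t\rangle \un t)\, d\un t\; e_A,
\]
which matches (\ref{ee-cke}); in particular, this interprets the Clifford-valued argument of $h_A$ as the restriction of the entire extension of $h_A$ to the complex plane $H_{\un t}$. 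Well-definedness of the integral is then immediate, since $M_s(h)$ is real-analytic on $\R^{m+1}$ by Theorem \ref{th-scke} and the integrand is continuous on the compact set $S^{m-1}$. Monogenicity of $M_a(h)$ follows at once from the quoted result of \cite{CLSS}: $\check R$ sends entire slice monogenic functions to entire monogenic functions.

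It remains to show that $M_a(h)$ has the axial form (\ref{ee-axm}). Since $M_a$ is $\R_m$-linear on components, it suffices to treat each scalar $h_A$. I would expand the analytically continued integrand as
\[
h_A(x_0 + \langle \un x,\un t\rangle \un t) = \sum_{k=0}^\infty \frac{\langle \un x,\un t\rangle^k\,\un t^k}{k!}\, h_A^{(k)}(x_0),
\]
and use $\un t^2 = -1$ to rewrite $\un t^{2j} = (-1)^j$ and $\un t^{2j+1} = (-1)^j\,\un t$. Term-by-term integration (justified by uniform convergence on $S^{m-1}$ for $(x_0,\un x)$ in any compact set) yields two kinds of contributions: the even-$k$ terms give scalar multiples of $\int_{S^{m-1}} \langle \un x,\un t\rangle^{2j}\, d\un t$, which by $SO(m)$-invariance of $d\un t$ depend only on $x_0$ and $|\un x|$; the odd-$k$ terms give $\int_{S^{m-1}} \langle \un x,\un t\rangle^{2j+1}\,\un t\, d\un t$, vectors which, again by the $SO(m)$-equivariance under rotations fixing $\un x$, must be proportional to $\un x$ and hence equal to $\frac{\un x}{|\un x|}$ times a scalar function of $x_0$ and $|\un x|$. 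Reassembling, $M_a(h_A) = B_A(x_0,|\un x|) + \frac{\un x}{|\un x|}\,C_A(x_0,|\un x|)$, confirming (\ref{ee-axm}).

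Finally, the boundary condition $M_a(h)(x_0,0)=h(x_0)$ is immediate: at $\un x = 0$ the inner product $\langle \un x,\un t\rangle$ vanishes, so the integrand reduces to $h_A(x_0)$ independently of $\un t$, and the normalization of $d\un t$ as a probability measure gives $M_a(h)(x_0,0) = \sum_A h_A(x_0)\, e_A = h(x_0)$. The main obstacle is the axial-form verification, specifically the $SO(m)$-equivariance argument for the odd-power integrals together with the interchange of sum and integral; both are routine given the uniform convergence of the Taylor series on compact subsets of $\R^{m+1}$ and the invariance of the spherical measure.
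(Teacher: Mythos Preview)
Your proposal is correct and follows essentially the same approach as the paper: factorize $M_a = \check R \circ M_s$, invoke \cite{CLSS} for monogenicity, and verify the axial form by Taylor-expanding and splitting into even and odd powers of $\un t$. The only difference is that the paper computes the spherical integrals $\int_{S^{m-1}}\langle \un x,\un t\rangle^{k}\,d\un t$ and $\int_{S^{m-1}}\langle \un x,\un t\rangle^{k}\,\un t\,d\un t$ explicitly (reducing them to $\int_0^\pi \sin^{m-1}\theta\cos^{2j}\theta\,d\theta$), whereas you argue their form via $SO(m)$-invariance and equivariance; both routes yield the decomposition (\ref{ee-axm}), and your symmetry argument is a perfectly valid, slightly more conceptual substitute for the explicit evaluation.
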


\begin{proof}
{}From (\ref{ee-sme}) and (\ref{ee-drt}) we see that the map $M_a$
in (\ref{ee-cke}) factorizes to
\be
\label{ee-ame2}
M_a = \check R \circ M_s \ .
\ee
The fact that the image of this map is a subspace of
the space of entire monogenic functions on $\R^{m+1}$ is
a consequence of the theorem A of \cite{CLSS}. 
We still need
to show that the functions $M_a(h)$ are 
axial monogenic for all $h\in \Aa(\R)\otimes \R_m$. Notice that
the Taylor series of $h$, with center at any point of $\R$ has
infinite radius of convergence.
Using (\ref{ee-sme}), Theorem \ref{th-scke}, and the fact 
that for $\un\omega\in S^{m-1}$ one has $\un\omega^{2k}=(-1)^k$, we obtain
\begin{align}
\nonumber M_a(h)(x_0,\un x) &= M_a(\sum_A h_A \, e_A)(x_0,\un x) = \\
=\sum_A \check R\circ M_s(h_A)(x_0,\un x) \, e_A &= \sum_A \int_{S^{m-1}}\sum_{k=0}^\infty \frac {(\langle \un x,\un\omega\rangle\un\omega)^k}{k!}  h_A^{(k)}(x_0)d\un\omega \, e_A\nonumber \\  
 \nonumber 
= \sum_A \left(\sum_{j=0}^\infty\int_{S^{m-1}} \frac{(-1)^j}{(2j)!}h_A^{(2j)}(x_0) \langle \un x,\un
\omega\rangle^{2j} \right. & \left. + \un \omega\,\frac{ (-1)^j}{(2j+1)!} \,  \, h_A^{(2j+1)}(x_0) \langle \un x,\un\omega\rangle^{2j+1} d\un\omega\right) \, e_A . 
\end{align}
and therefore,
$$
M_a(h)(x_0,\un x) = \sum_A\left(\sum_{j=0}^\infty \frac{(-1)^j}{(2j)!}h_A^{(2j)}(x_0)C_{m,2j}|\un x|^{2j} + \un x\,\frac{ (-1)^j}{(2j+1)!}h_A^{(2j+1)}(x_0) C_{m,2j+2}|\un x|^{2j }\right)\, e_A,
$$
where
$$
C_{m,2j}=\int_0^{\pi} \sin^{m-1}(\theta) \cos^{2j}(\theta)\,d\theta.
$$
This is of the form (\ref{ee-axm}) which completes the proof.
\end{proof}

We therefore get the following commutative diagram.
\begin{align}
 \label{d22}
\begin{gathered}
\xymatrix{
&&  \AM (\R^{m+1})  \\
\mathcal{A} (\R) \otimes \R_m  \ar[rrd]_{M_s}  \ar[rru]^{M_a} && \\
&& \SM (\R^{m+1}) \ar[uu]_{\check R}
  }
\end{gathered}
\end{align}

As an illustration let us consider the axial monogenic extension of plane waves  $\varphi_p$, with $\varphi_p(x_0)=e^{ipx_0}$.
The axial monogenic extension of $\varphi_p$ follows from Example 2.2.1 and Remark 2.1 of \cite{DS}, 
where the axial monogenic extension of $e^{x_0}$ is given in terms of Bessel functions, by taking $k=0$ and replacing 
$\un x$ by $ip \un x$ in the expressions of Example 2.2.1 of \cite{DS}.
\begin{proposition}
\label{le-ampw}
 The axial monogenic plane waves are given by
\be
\label{ee-ampw}
M_a(\varphi_p)(x_0, \un x) = 
\, \Gamma(\frac m2) \left(\frac{2i}{p|\un x|}\right)^{m/2-1}\left( I_{m/2-1}(p |\un x|) + i \frac{\un x}{|\un x|} \,  I_{m/2}(p |\un x|) \right) \, e^{ipx_0} \, ,
\ee
where $I_\alpha$ are the hyperbolic Bessel functions.
\end{proposition}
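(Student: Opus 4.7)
The natural route is to exploit the factorization $M_a = \check R \circ M_s$ established in (\ref{ee-ame2}), reducing the problem to computing the slice-monogenic extension of $\varphi_p$ and then applying the dual Radon transform to the result.

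First, from Theorem \ref{th-scke} one has
$$
M_s(\varphi_p)(x_0,\un x) = e^{\un x\, d/dx_0}\, e^{ipx_0} = e^{ipx_0}\, e^{ip\un x},
$$
and since $\un x^2 = -|\un x|^2$, an elementary power-series calculation in the Clifford algebra gives $e^{ip\un x} = \cosh(p|\un x|) + i(\un x/|\un x|)\sinh(p|\un x|)$. Applying $\check R$ amounts to substituting $\un x \mapsto \langle\un x,\un t\rangle\un t$ and integrating over $\un t \in S^{m-1}$. Using that $\mathrm{sgn}(\langle\un x,\un t\rangle)\sinh(p|\langle\un x,\un t\rangle|) = \sinh(p\langle\un x,\un t\rangle)$, one arrives at
$$
M_a(\varphi_p)(x_0,\un x) = e^{ipx_0}\int_{S^{m-1}}\!\bigl(\cosh(p\langle\un x,\un t\rangle) + i\,\un t\,\sinh(p\langle\un x,\un t\rangle)\bigr)\,d\un t.
$$

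By rotational invariance, taking $\un x = |\un x|e_1$ without loss of generality, the first (scalar) integral reduces via $s = \cos\theta$ to a Poisson-type integral on $[-1,1]$ weighted by $(1-s^2)^{(m-3)/2}$, which is the classical representation of $\Gamma(m/2)(2/(p|\un x|))^{m/2-1} I_{m/2-1}(p|\un x|)$. The second integral, by the $O(m-1)$-symmetry about $\un x$, is a scalar multiple of $\un x/|\un x|$; computing the coefficient by differentiating the first integral in $p|\un x|$ and invoking the recurrence $\frac{d}{dz}(z^{-\nu}I_\nu(z)) = z^{-\nu}I_{\nu+1}(z)$ with $\nu=m/2-1$ produces the $I_{m/2}$ contribution with the required factor of $i\,\un x/|\un x|$.

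\textbf{Main obstacle and alternative.} The delicate point is constant bookkeeping: correctly pairing the normalization of the sphere measure (the paper uses the probability measure) with the ratio $|S^{m-2}|/|S^{m-1}| = \Gamma(m/2)/(\sqrt{\pi}\,\Gamma((m-1)/2))$ in the Poisson representation, and keeping track of branches of complex powers if one follows the author's hint instead. That hint gives a shortcut: combine the scaling covariance $M_a(h(\lambda\,\cdot))(x_0,\un x) = M_a(h)(\lambda x_0,\lambda\un x)$, immediate from the integral formula of Theorem \ref{th-cke}, with the Bessel-$J$ expression for $M_a(e^{x_0})$ in \cite{DS}, Example 2.2.1 (taking $k=0$), setting $\lambda = ip$ and converting $J_\nu$ to $I_\nu$ via $J_\nu(iz) = i^\nu I_\nu(z)$.
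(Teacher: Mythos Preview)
Your proposal is correct, but the paper takes a different and more elementary route. Rather than passing through $M_a=\check R\circ M_s$ and evaluating the spherical integral via the Poisson representation of $I_\nu$, the paper writes an ansatz $M_a(\varphi_p)(x_0,\un x)=\sum_{j\geq 0}c_j\,\un x^j B_j(x_0)\,e^{ipx_0}$, imposes the monogenicity condition $(\partial_{x_0}+\partial_{\un x})(\cdot)=0$, and reads off the recurrence $B_{j+1}=-ipB_j-B_j'$ with $B_0=1$, whose solution is $B_j=(-ip)^j$. This shows directly that $M_a(\varphi_p)$ is the axial monogenic extension of $e^{x_0}$ from \cite{DS}, Example 2.2.1, with $\un x$ replaced by $ip\,\un x$, and the Bessel formula is then quoted from there. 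Your approach has the advantage of being fully self-contained within the paper's framework and of actually producing the Bessel functions rather than citing them; the paper's approach is shorter and explains transparently why the substitution $\un x\mapsto ip\,\un x$ works, without any normalization bookkeeping. Your ``alternative'' via the scaling covariance $M_a(h(\lambda\cdot))(x_0,\un x)=M_a(h)(\lambda x_0,\lambda\un x)$ with $\lambda=ip$ is essentially the paper's argument stated in one line, bypassing the recurrence; the recurrence is really just a verification that this analytic-continuation-in-$\lambda$ step is legitimate.
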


\begin{proof}
By representing, as in example 2.2.1 of \cite{DS}, $M_a(\varphi_p)(x_0) $ in the form
$$
M_a(\varphi_p)(x_0, \un x) = \sum_{j=0}^\infty \, c_j \un x^j B_j e^{ipx_0},
$$
and expressing the monogeneicity of the transform
$$
(\partial_{x_0} + \partial_{\un x}) \,  \sum_{j=0}^\infty \, c_j \un x^j B_j e^{ipx_0} = 0,
$$
we obtain the following recurrence relation for the functions $B_j(x_0)$,
$$
B_{j+1}(x_0) = - ip B_j(x_0) - B'_j(x_0),  \quad   B_0(x_0)=1.
$$
The solution  is $B_j(x_0)= (-ip)^j$. Then we see that the transform is obtained by replacing
$\un x$ by $ip \un x$ in the expressions of Example 2.2.1 of \cite{DS}.
\end{proof}

\begin{remark}\label{injective}From Theorem A of \cite{CLSS}, 
$\check R: \SM (\R^{m+1})\to \AM (\R^{m+1})$ is an injective map. In fact, 
from Corollary 4.4 of \cite{CLSS}, we see that 
(non-zero) slice monogenic functions do not belong to ${\rm Ker}\, \check R$.
\end{remark}


\begin{remark}Note that, as in \cite{DS}, considering $h\in {\mathcal A}(\R)\otimes \C_m$, one also has, 
\be\label{axialtoo}
M_a(h)(x_0,\un x) = \sum_A\int_{S^{m-1}} h_A (x_0+i\langle \un x, \un t\rangle )(1-i\un t)) \, d\un t\, e_A,
\ee
which is equivalent to (\ref{ee-cke}) and can be readily verified by expansion in power series.
\end{remark}

\section{Clifford extensions of the CST}
\label{s-3}

The two extensions  (\ref{ee-sme}) and  (\ref{ee-cke}) give two natural
paths to define coherent state transforms by replacing the
vertical arrow of analytic continuation in the diagram
(\ref{d2}).

We refer the reader interested in the representation theoretic and the
quantum mechanical meaning  of the Hilbert spaces introduced in 
the present section to section \ref{s-4}.

\subsection{Slice monogenic coherent state transform (SCST)}
\label{ss-31}

The slice monogenic CST is naturally defined by substituting the vertical arrow
in the diagram (\ref{d2}) by the slice monogenic extension
(\ref{ee-sme}) leading to

\begin{align}
 \label{d3}
\begin{gathered}
\xymatrix{
&&  \SM (\R^{m+1})\otimes \C  \\
L^2(\R, dx_0)\otimes \C_m  \ar@{^{(}->}[rr]_{e^{\frac{\Delta_0}2}}  \ar@{^{(}->}[rru]^{U_s} && \widetilde \Aa (\R)\otimes \C_m
\ar[u]_{M_s}
  }
\end{gathered}
\end{align}

\noindent where $\Delta_0 = \frac{d^2}{dx_0^2}$.
Notice that even though the plane waves, $\varphi_p(x_0)=e^{ipx_0}$, are not in the
Hilbert space $L^2(\R, dx_0)$, they are
generalized eigenfunctions
of $\Delta_0$ with  eigenvalue $-p^2$, and therefore
\be
\label{ee-pw}
e^{\frac{\Delta_0}2}(\varphi_p)(x_0) =
e^{\frac{\Delta_0}2} \, e^{ipx_0} = e^{- \frac{p^2}2} \,  e^{ipx_0} = e^{- \frac{p^2}2} \varphi_p(x_0).
\ee
On the other hand since the plane waves $\varphi_p \in \Aa(\R)$ we can
use (\ref{ee-sme}) to obtain the following  very simple result.
\begin{lemma}
\label{le-smpw} The slice monogenic plane waves are given by
\be
\label{ee-smpw}
M_s(\varphi_p)(x_0) = M_s(e^{ipx_0}) = e^{i p \vec x} =
\left(\cosh(p |\un x|) + i \frac{\un x}{|\un x|} \, \sinh(p |\un x|) \right)\, e^{ipx_0} .
\ee
\end{lemma}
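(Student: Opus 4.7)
The plan is to apply the power-series formula for the slice monogenic extension given in Theorem \ref{th-scke} directly to the plane wave $\varphi_p(x_0) = e^{ipx_0}$, and then resum the series using the Clifford identity $\un x^{2} = -|\un x|^2$.

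First I would compute $M_s(\varphi_p)$ from the defining series
\[
M_s(h)(x_0,\un x) \;=\; \sum_{k=0}^\infty \frac{\un x^k}{k!}\,\frac{d^k h}{dx_0^k}(x_0).
\]
Since $\frac{d^k}{dx_0^k} e^{ipx_0} = (ip)^k e^{ipx_0}$, this gives
\[
M_s(\varphi_p)(x_0,\un x) \;=\; \Bigl(\sum_{k=0}^\infty \frac{(ip\,\un x)^k}{k!}\Bigr) e^{ipx_0},
\]
which motivates the compact notation $e^{ip\vec x}\,e^{ipx_0}$ once one interprets $e^{ip\un x}$ via this series (note that $x_0$ and $\un x$ commute, so the split is unambiguous).

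Next I would evaluate the bracketed series explicitly. Separating even and odd terms and using $\un x^{2j} = (-1)^j |\un x|^{2j}$, $\un x^{2j+1} = (-1)^j |\un x|^{2j}\,\un x$, together with $(ip)^{2j}=(-1)^j p^{2j}$ and $(ip)^{2j+1}=i(-1)^j p^{2j+1}$, the signs pair up to $+1$ in each summand, leaving
\[
\sum_{j=0}^\infty \frac{(p|\un x|)^{2j}}{(2j)!} \;+\; i\,\frac{\un x}{|\un x|}\sum_{j=0}^\infty \frac{(p|\un x|)^{2j+1}}{(2j+1)!} \;=\; \cosh(p|\un x|) + i\,\frac{\un x}{|\un x|}\sinh(p|\un x|).
\]
Multiplying by $e^{ipx_0}$ gives the stated formula.

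There is no real obstacle: the only point requiring attention is the bookkeeping of the signs coming from $i^k$ and $\un x^k$, which conspire to produce hyperbolic (not trigonometric) functions — a quick sanity check is that at $|\un x|\to 0$ the expression reduces to $e^{ipx_0}$, matching $M_s(h)(x_0,0)=h(x_0)$ from Theorem \ref{th-scke}, and that applying $\partial_{x_0} + \un\omega\,\partial_v$ along any slice $H_{\un\omega}$ annihilates the result, consistent with (\ref{ee-sm}).
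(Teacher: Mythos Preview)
Your proof is correct and follows exactly the paper's approach: apply the power-series definition (\ref{ee-sme}) of $M_s$ to $e^{ipx_0}$, obtain $e^{ipx_0}\sum_k (ip\,\un x)^k/k!$, and resum using $\un x^2=-|\un x|^2$ to get the $\cosh$/$\sinh$ form. One small notational slip: the ``compact notation'' should be $e^{ip\vec x}$ (equivalently $e^{ip\un x}e^{ipx_0}$), not $e^{ip\vec x}\,e^{ipx_0}$, since $\vec x = x_0+\un x$ already contains $x_0$.
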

\begin{proof}
{}From (\ref{ee-sme}) we obtain
$$
M_s(\varphi_p)(x_0)  = e^{ipx_0} \, \sum_{k=0}^\infty \, \frac{ (ip\un x)^k}{k!}
= \left(\cosh(p |\un x|) + i \frac{\un x}{|\un x|} \, \sinh(p |\un x|) \right)\, e^{ipx_0} .
$$

\end{proof}

\begin{proposition}
\label{ee-scst}
Let $f \in L^2(\R, dx_0)$ and
$$
f(x_0)= \frac 1{\sqrt{2\pi}} \, \int_\R \, e^{ipx_0} \, \tilde f(p) \, dp.
$$
We have
\ba
\label{ecthar}
U_s(f)(x_0, \un x)   &=&  \frac 1{\sqrt{2\pi}} \int_\R \, e^{- \frac {p^2}2} \, e^{ip\vec x} \,  \tilde f(p) \, dp =  \\
 \nonumber &=& \frac 1{\sqrt{2\pi}} \int_\R \, e^{- \frac {p^2}2} \, e^{ip x_0} \,
\cosh(p |\un x|) \,  \tilde f(p) \, dp  + i \frac{\un x}{|\un x|} \frac 1{\sqrt{2\pi}} \int_\R \, e^{- \frac {p^2}2} \, e^{ipx_0} \,  \sinh(p |\un x|) \,\tilde f(p) \, dp
\ea
\end{proposition}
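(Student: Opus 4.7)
The plan is to use the factorization $U_s = M_s \circ e^{\Delta_0/2}$ encoded in diagram (\ref{d3}), and to compute each factor on the Fourier representation of $f$.

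First, I would note that $e^{\Delta_0/2}$ acts on $L^2(\R, dx_0)$ as Fourier multiplication by $e^{-p^2/2}$, a fact already embodied in the action on generalized eigenfunctions (plane waves) in (\ref{ee-pw}). Hence
\begin{equation*}
e^{\Delta_0/2}f(x_0) \;=\; \frac{1}{\sqrt{2\pi}}\int_\R e^{-p^2/2}\, e^{ipx_0}\, \tilde f(p)\, dp,
\end{equation*}
with the integrand lying in $L^1(\R)$ by Cauchy--Schwarz against the $L^2$ Gaussian $e^{-p^2/2}$ and $\tilde f \in L^2(\R)$.

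Next I would apply $M_s$ using the power series expression of Theorem \ref{th-scke}, namely $M_s(h)(x_0,\un x) = \sum_{k\ge 0} \frac{\un x^k}{k!}\, h^{(k)}(x_0)$. Differentiating under the integral, justified because $p^k\, e^{-p^2/2}\,\tilde f(p) \in L^1(\R)$ uniformly for $x_0$ in compacts, gives
\begin{equation*}
M_s\bigl(e^{\Delta_0/2}f\bigr)(x_0,\un x) \;=\; \sum_{k=0}^\infty \frac{\un x^k}{k!} \cdot \frac{1}{\sqrt{2\pi}}\int_\R (ip)^k\, e^{-p^2/2}\, e^{ipx_0}\, \tilde f(p)\, dp.
\end{equation*}
Interchanging sum and integral is warranted by Fubini--Tonelli applied to the majorant $\sum_k \tfrac{|\un x|^k |p|^k}{k!}\, e^{-p^2/2}\,|\tilde f(p)| = e^{|\un x||p| - p^2/2}\,|\tilde f(p)|$, which is integrable for every fixed $\un x$ since the Gaussian dominates the linear exponential in $p$ and $\tilde f \in L^2$. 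Recognising $\sum_k \tfrac{(ip\un x)^k}{k!}\, e^{ipx_0} = e^{ip\vec x}$ via Lemma \ref{le-smpw} then produces the first equality in (\ref{ecthar}).

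For the second equality, I would simply substitute the closed form $e^{ip\vec x} = \bigl(\cosh(p|\un x|) + i\tfrac{\un x}{|\un x|}\sinh(p|\un x|)\bigr)\, e^{ipx_0}$ from Lemma \ref{le-smpw} and pull the $p$-independent factor $\tfrac{\un x}{|\un x|}$ outside the sinh-integral. The whole argument is essentially bookkeeping: the only delicate step is commuting $M_s$ with the Fourier integral, and this is controlled by the Gaussian decay of $e^{-p^2/2}$ together with Cauchy--Schwarz, so no real obstacle is expected.
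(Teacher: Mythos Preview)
Your argument is correct and follows the same route as the paper, which simply cites Lemma~\ref{le-smpw}, (\ref{ee-sme}) and (\ref{ee-pw}) without further detail. You have merely supplied the analytic justifications (integrability via Cauchy--Schwarz, Fubini for the sum--integral interchange) that the paper leaves implicit.
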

\begin{proof}
This result follows from  Lemma \ref{le-smpw}, (\ref{ee-sme}) and (\ref{ee-pw}).
\end{proof}

Consider the standard inner product on $\C_m$.
Our main result in this section is the following.
\begin{theorem}
\label{th-scst}
The SCST, $U_s$   in Diagram (\ref{d3}),  is unitary onto its image  for the measure
 $d\nu_m$
on $\R^{m+1}$  given by
$$
d\nu_m  =  {\frac{2}{\sqrt{\pi} }} \,  \frac 1{Vol(S^{m-1})}  \, \frac{e^{-  |\un x|^2}}{|\un x|^{m-1}} \, dx_0 d\un x ,
$$
where $Vol(S^{m-1})$ denotes the volume of the unit radius sphere in $\R^m$, i.e. the map
$U_s$ in the diagram
\begin{align}
 \label{d4}
\begin{gathered}
\xymatrix{
&&  {\mathcal H}_s  \\
L^2(\R, dx_0)\otimes \C_m  \ar[rr]_{e^{\frac{\Delta_0}2}}  \ar[rru]^{U_s} && \widetilde \Aa (\R) \otimes \C_m
\ar[u]_{M_s}
  }
\end{gathered}
\end{align}
is a unitary isomorphism, where ${\mathcal H}_s =U_s(L^2(\R, dx_0)\otimes \C_m) \subset\SM L^2 (\R^{m+1}, 
d\nu_m)$.

\end{theorem}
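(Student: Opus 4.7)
I would prove isometry of $U_s$ directly from the explicit Fourier representation (\ref{ecthar}); unitarity onto the image $\mathcal{H}_s$ then follows automatically since $U_s$ is onto its image by definition. Since $\{e_A\}$ is orthonormal for the standard inner product on $\C_m$ and $U_s$ acts componentwise on $L^2(\R,dx_0)\otimes\C_m$, it suffices to establish $\|U_s(f)\|^2_{L^2(d\nu_m)} = \|f\|^2_{L^2(\R)}$ for scalar-valued $f \in L^2(\R)\otimes\C$.

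For such $f$, formula (\ref{ecthar}) decomposes
$$
U_s(f)(x_0,\un x) = A(x_0,\un x) + i\,\frac{\un x}{|\un x|}\,B(x_0,\un x),
$$
where $A$ and $B$ are the $\cosh$- and $\sinh$-Fourier integrals on the right-hand side of (\ref{ecthar}). Since $1\in\R_m^0$ and $\un x/|\un x|\in\R_m^1$ are pointwise orthogonal in $\C_m$, we have $|U_s(f)|^2_{\C_m} = |A|^2 + |B|^2$. Then I would integrate first in $x_0$: recognizing $A(\cdot,\un x)$ and $B(\cdot,\un x)$ as the inverse Fourier transforms of $p\mapsto e^{-p^2/2}\cosh(p|\un x|)\tilde f(p)$ and $p\mapsto e^{-p^2/2}\sinh(p|\un x|)\tilde f(p)$ respectively, Plancherel in $x_0$ combined with $\cosh^2 + \sinh^2 = \cosh(2\cdot)$ gives
$$
\int_\R (|A|^2 + |B|^2)\,dx_0 \;=\; \int_\R e^{-p^2}\cosh(2p|\un x|)\,|\tilde f(p)|^2\,dp.
$$

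Next I would pass to polar coordinates $\un x = r\omega$, $d\un x = r^{m-1}dr\,d\omega$. The factor $|\un x|^{m-1}$ in the denominator of $d\nu_m$ cancels the Jacobian, and the surface integral contributes $\mathrm{Vol}(S^{m-1})$, which cancels its reciprocal. What remains is the Gaussian identity
$$
\frac{2}{\sqrt{\pi}}\int_0^\infty e^{-r^2}\cosh(2pr)\,dr \;=\; e^{p^2},
$$
a one-line completion of squares. This precisely cancels the $e^{-p^2}$ weight, so by Fubini and classical Plancherel for the Fourier transform on $\R$,
$$
\|U_s(f)\|^2_{L^2(d\nu_m)} \;=\; \int_\R |\tilde f(p)|^2\,dp \;=\; \|f\|^2_{L^2(\R)}.
$$

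\textbf{Main obstacles.} There is no deep conceptual obstacle; the role of the normalization in $d\nu_m$ is to arrange exactly this Gaussian cancellation. Two technical points do require care. First, the interchange of Fourier integration, integration in $\un x$, and integration in $x_0$ must be justified: I would first establish the isometry on the Schwartz subspace $\Ss(\R)\otimes\C_m$ (where $\tilde f$ has rapid decay and all integrals converge absolutely) and then extend by continuity to all of $L^2(\R)\otimes\C_m$. Second, one should check that $U_s(f)$ genuinely takes values in $\SM(\R^{m+1})$ so that the $L^2(d\nu_m)$-norm is a norm on a subspace of $\SM L^2(\R^{m+1},d\nu_m)$; this is immediate from Diagram (\ref{d3}), since $U_s = M_s\circ e^{\Delta_0/2}$ and $M_s$ maps $\Aa(\R)\otimes\C_m$ into $\SM(\R^{m+1})\otimes\C$ by Theorem \ref{th-scke}. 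Unitarity of $U_s$ onto $\mathcal{H}_s := U_s(L^2(\R,dx_0)\otimes\C_m)$ is then automatic from the isometry estimate.
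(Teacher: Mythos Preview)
Your argument is correct and follows essentially the same route as the paper: both reduce to Schwartz data, apply Plancherel in $x_0$ to produce the factor $e^{-p^2}\cosh(2p|\un x|)$ (the paper writes this as the scalar part $[e^{2i\un x p}]_0$), pass to polar coordinates in $\un x$ so that the $|\un x|^{m-1}$ factors cancel, and finish with the Gaussian identity $\tfrac{2}{\sqrt\pi}\int_0^\infty e^{-r^2}\cosh(2pr)\,dr=e^{p^2}$. The only point worth tightening is your reduction to scalar $f$: since $U_s(f_A)$ takes values in $\mathrm{span}\{1,\un x/|\un x|\}$ rather than in $\C$, the cross terms $\langle U_s(f_A)e_A,\,U_s(f_{A'})e_{A'}\rangle_{\C_m}$ for $A\neq A'$ need not vanish pointwise, but they are odd in $\un\omega=\un x/|\un x|$ and hence disappear after the spherical integration in $d\nu_m$; the paper's first displayed line implicitly uses the same fact.
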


\begin{proof}
Let $\Ss(\R)$ be the space of Schwarz functions on $\R.$
For $f,h \in \Ss(\R) \otimes \R_m$, with $f=\sum_A f_A e_A, h=\sum_Ah_A e_A$ we have
\bas
<U_s(f), U_s(h)> &=&
\frac 2{\sqrt \pi}
\frac{1}{Vol(S^{m-1})} \sum_A \, \int_{\R  \times \R^m} \,  \left[e^{ 2i \un x p}  \right]_0
\, e^{-{p^2}} \, \tilde f_A(p) \ov{\tilde h_A}(p) \,
\frac{e^{-|\un x|^2}}{|\un x|^{m-1}} d^mx dp = \\
&=& \frac 2{\sqrt \pi}
\frac{1}{Vol(S^{m-1})} \sum_A \,\int_{\R} \,
 e^{-{p^2}} \, \tilde f_A(p) \ov{\tilde h_A}(p) \,
\left(\int_{\R^m} \cosh({2  |\un x| p})  \frac{e^{-|\un x|^2}}{|\un x|^{m-1}}\, d^mx\right)  dp = \\
&=&
\frac 2{\sqrt \pi}
\frac{1}{Vol(S^{m-1})}\sum_A \,\int_{\R} \,
 e^{-{p^2}} \, \tilde f_A(p) \ov{\tilde h_A}(p) \,
\left(\int_{0}^\infty \cosh({2  u p})  {e^{- u^2}}\, du \right) dp = \\
&=&   \sum_A \,\int_{\R} \,
  \, \tilde f_A(p) \ov{\tilde h_A}(p) \, dp = <f, h>.
\eas
{}From the density of $\Ss(\R)  \otimes \C$ in $L^2(\R)$ we conclude that $U_s$ is unitary onto its image.
\end{proof}

\begin{remark}
\label{rm-cxpl}
For each complex plane $H_{\un\omega}:=\{u+v\un\omega:u,v\in\mathbb{R}\}$ and for 
$f\in L^2(\R,dx)\otimes \C_m$, $f=\sum_A f_A\,e_A$, 
the map $f\mapsto U_s(f)\big\vert_{H_{\un\omega}}$ coincides, 
for each component $f_A$ of $f$, 
with the Segal--Bargmann transform, which is surjective to 
${\mathcal H}L^2(H_{\un\omega}, d\nu)$ and unitary for 
the measure $d\nu=e^{-v^2}du\,dv$ on $H_{\un\omega}$.
\end{remark}

\subsection{Axial monogenic coherent state transform (ACST)}
\label{ss-32}

The axial monogenic CST is  also naturally defined as the heat kernel evolution followed 
by the axial Cauchy-Kowalewski extension
$$
U_a  = M_a \circ e^{\frac{\Delta_0}2} \, ,  
$$
i.e. by substituting the vertical arrow
in the diagram (\ref{d1}) by the axial monogenic extension (\ref{ee-cke})

\begin{align}
 \label{d5}
\begin{gathered}
\xymatrix{
&&  \AM (\R^{m+1}) \otimes \C \\
L^2(\R, dx_0) \otimes \C_m \ar@{^{(}->}[rr]_{e^{\frac{\Delta_0}2}}  \ar@{^{(}->}[rru]^{U_a} && \Aa (\R) \otimes \C_m
\ar[u]_{M_a}
  }
\end{gathered}
\end{align}

The following is an easy consequence of Theorem \ref{th-scst}, (\ref{ee-ame2}) and Remark \ref{injective}.
\begin{theorem}
Let ${\mathcal H}_a\subset  \AM (\R^{m+1}) \otimes \C$ denote the image of $L^2(\R,dx_0)\otimes \C_m$ under $U_a$.
The restriction of the dual Radon transform to ${\mathcal H}_s$  defines an isomorphism 
to   ${\mathcal H}_a$.

The diagram 
\begin{align}
 \label{d7}
\begin{gathered}
\xymatrix{
&&&&  {\mathcal H}_a  \\
L^2(\R, dx_0)\otimes \C_m  \ar[rr]^{\quad e^{{\Delta_0}/2}}  \ar[rrrru]^{U_a} \ar[rrrrd]_{U_s} && \widetilde \Aa (\R)
\otimes \C_m
\ar[rru]_{M_a} \ar[rrd]^{M_s}  \\
&&&&  {\mathcal H}_s.  \ar[uu]_{\check R}
  }
\end{gathered}
\end{align}
is  commutative and its exterior arrows are unitary isomorphisms for the inner product
on ${\mathcal H}_a$ given by
$\langle\cdot,\cdot\rangle_{{\mathcal H}_a}$,
\be
\label{ee-ip}
\langle F,G\rangle_{{\mathcal H}_a}=\int_{\R^{m+1}} (\check R)^{-1}(F)(\check R)^{-1}(G)d\nu_m,
\ee
where $d\nu_m$ was defined in Theorem \ref{th-scst}.
\end{theorem}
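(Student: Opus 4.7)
The plan is to deduce everything from the already-established unitarity of $U_s$ (Theorem \ref{th-scst}), the factorization $M_a = \check R \circ M_s$ in (\ref{ee-ame2}), and the injectivity of $\check R$ on slice monogenic functions (Remark \ref{injective}). The strategy is to first obtain commutativity of (\ref{d7}) as a formal consequence of the definitions, then upgrade the map $\check R\colon \mathcal{H}_s \to \mathcal{H}_a$ to a bijection, and finally transfer the inner product by declaring $\check R$ an isometry; unitarity of the exterior arrows $U_a$ and $U_s$ then follows automatically.

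First I would verify commutativity. By definition $U_s = M_s \circ e^{\Delta_0/2}$ and $U_a = M_a \circ e^{\Delta_0/2}$ acting componentwise on $L^2(\R,dx_0)\otimes\C_m$, so combining with $M_a = \check R \circ M_s$ from (\ref{ee-ame2}) gives $U_a = \check R \circ U_s$. Every triangle and square in (\ref{d7}) commutes tautologically once this identity is in hand.

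Next I would address the isomorphism $\check R\colon \mathcal{H}_s \to \mathcal{H}_a$. Surjectivity is immediate from the definition, since $\mathcal{H}_a = U_a(L^2(\R,dx_0)\otimes\C_m) = \check R(U_s(L^2(\R,dx_0)\otimes\C_m)) = \check R(\mathcal{H}_s)$. For injectivity, $\mathcal{H}_s \subset \SM(\R^{m+1})\otimes\C$, and Remark \ref{injective} says that $\check R$ has trivial kernel on slice monogenic functions, so its restriction to $\mathcal{H}_s$ is injective. Hence $\check R\colon \mathcal{H}_s \to \mathcal{H}_a$ is a bijection, and $(\check R)^{-1}$ in (\ref{ee-ip}) is well defined.

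Finally, with the inner product (\ref{ee-ip}), $\check R$ is by construction an isometry $\mathcal{H}_s \to \mathcal{H}_a$, hence a unitary isomorphism. Composing with the unitary $U_s\colon L^2(\R,dx_0)\otimes\C_m \to \mathcal{H}_s$ given by Theorem \ref{th-scst} shows that $U_a = \check R \circ U_s$ is also unitary onto $\mathcal{H}_a$. The main subtlety — really the only one — is to check that formula (\ref{ee-ip}) genuinely defines a Hilbert-space inner product on $\mathcal{H}_a$; this reduces to the injectivity statement from Remark \ref{injective} (so $(\check R)^{-1}$ makes sense on all of $\mathcal{H}_a$) together with the positive-definiteness of the $\C_m$-valued pairing already used in Theorem \ref{th-scst}. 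Everything else is a formal consequence of the commutative diagram.
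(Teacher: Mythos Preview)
Your argument is correct and follows exactly the paper's approach: use Remark \ref{injective} for injectivity of $\check R$ on $\mathcal H_s$, use $M_a=\check R\circ M_s$ from (\ref{ee-ame2}) to get $U_a=\check R\circ U_s$ and hence surjectivity onto $\mathcal H_a$, and then transport the inner product so that $\check R$ is unitary by construction. The only difference is that you spell out in more detail what the paper compresses into three sentences.
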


\begin{proof}
The injectivity of $\check R_{|{\mathcal H}_s}$ follows from Remark \ref{injective}. From (\ref{ee-ame2}), 
we conclude that $U_a = \check R \circ U_s$ which implies the surjectivity of 
$\check R_{|{\mathcal H}_s} : {\mathcal H}_s \longrightarrow {\mathcal H}_a$. Then, the 
inner product (\ref{ee-ip}) is well defined, the diagram (\ref{d7}) is commutative and
the exterior arrows are unitary isomorphisms.
\end{proof}

\begin{remark}
As mentioned in the introduction, a possibly interesting alternative way of defining a 
monogenic CST would be by replacing the dual Radon transform in 
(\ref{d22}) and in diagram (\ref{d7}) by the Fueter mapping, $\Delta^{\frac{m-1}2}$,
where $\Delta = \sum_{j=0}^m  \frac{\partial^2}{\partial x_j^2}$ (see \cite{F,Q,KQS,PQS,Sc}). 
Notice however that the map $\Delta^{\frac{m-1}2} \circ M_s$
does not correspond to a monogenic extension of analytic functions of one variable
as the restriction to the real line does not give back the original functions.
It leads nevertheless to an interesting transform and it would be very interesting to relate it with 
$U_a$.

\end{remark}

\section{Representation theoretic and quantum mechanical interpretation}
\label{s-4}

Recall that the Schr\"odinger representation in quantum mechanics is the representation for which the position 
operator $\hat x_0$ acts by multiplication on $L^2(\R,dx_0)$. The momentum operator is then given by 
$$
\hat p_0 = i \frac{d}{dx_0}.
$$

The CST from Section \ref{cst} intertwines the Schr\"odinger 
representation with the Segal-Bargmann representation, on which
the operator $\hat x_0+i\hat p_0$ acts as the operator of multiplication 
by the  holomorphic function $x_0+ip_0$ (see Theorem 6.3 of \cite{Ha2})
\be\label{cstacts}
\left( U\circ (\hat x_0 +i\hat p_0) \circ U^{-1}\right)(f)(x_0,p_0)=(x_0+ip_0)f(x_0,p_0).
\ee

We will prove now the analogous result that
the slice monogenic CST intertwines the Schr\"odinger representation with the
 representation on which  
$\hat x_0+i \hat p_0$ acts as the operator of left multiplication 
by the slice monogenic function $x_0+ \un x$.
\begin{proposition}
\label{th-scts} 
The observable $x_0+ip_0$ is represented in the slice monogenic representation
by the operator of multiplication by the slice monogenic function $x_0+\un x$, i.e.
\be\label{scstacts}
\left (U_s \circ (\hat x_0 + i \hat p_0) \circ U_s^{-1}\right)(f)(x_0,\un x) = (x_0+\un x) f(x_0,\un x),\,\, f\in {\mathcal H}_s.
\ee
\end{proposition}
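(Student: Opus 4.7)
My plan is to reduce the claim to two elementary intertwining relations, one between the heat kernel evolution $e^{\Delta_0/2}$ and position, and one between the slice monogenic extension $M_s$ and multiplication by $x_0$. Since $U_s = M_s \circ e^{\Delta_0/2}$ by Diagram (\ref{d3}) and $\hat x_0 + i\hat p_0 = \hat x_0 - \partial_{x_0}$, I would first conjugate $\hat x_0$ through the heat kernel. From $[\Delta_0/2, \hat x_0] = \partial_{x_0}$ and $[\Delta_0/2,\partial_{x_0}] = 0$, the Baker--Campbell--Hausdorff formula terminates, giving the operator identity $e^{\Delta_0/2}\,\hat x_0 = (\hat x_0 + \partial_{x_0})\,e^{\Delta_0/2}$. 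Combined with the obvious commutation $e^{\Delta_0/2}\,\partial_{x_0} = \partial_{x_0}\,e^{\Delta_0/2}$, this yields
\begin{equation}
e^{\Delta_0/2}\,(\hat x_0 - \partial_{x_0}) \;=\; \hat x_0\,e^{\Delta_0/2}. \nonumber
\end{equation}

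The second step is to show that $M_s$ intertwines multiplication by $x_0$ on $\widetilde{\mathcal A}(\R)\otimes\C_m$ with multiplication by the paravector $x_0 + \un x$ on $\SM(\R^{m+1})\otimes\C$. Using the series representation (\ref{ee-sme}) and the Leibniz rule,
\begin{equation}
M_s(x_0\,g)(x_0,\un x) \;=\; \sum_{k=0}^\infty \frac{\un x^k}{k!}\frac{d^k}{dx_0^k}\bigl(x_0\, g(x_0)\bigr) \;=\; x_0\sum_{k=0}^\infty \frac{\un x^k}{k!}\,g^{(k)}(x_0) + \un x\sum_{k=1}^\infty \frac{\un x^{k-1}}{(k-1)!}\,g^{(k-1)}(x_0), \nonumber
\end{equation}
and since $x_0 \in \R$ commutes with $\un x \in \R_m^1$, this equals $(x_0+\un x)\,M_s(g)(x_0,\un x)$, giving the relation $M_s \circ \hat x_0 = (x_0+\un x)\cdot M_s$.

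Composing the two intertwinings yields
\begin{equation}
U_s\circ(\hat x_0 + i\hat p_0) \;=\; M_s\circ e^{\Delta_0/2}\circ(\hat x_0-\partial_{x_0}) \;=\; M_s\circ \hat x_0\circ e^{\Delta_0/2} \;=\; (x_0+\un x)\cdot\bigl(M_s\circ e^{\Delta_0/2}\bigr) \;=\; (x_0+\un x)\cdot U_s, \nonumber
\end{equation}
which is (\ref{scstacts}) upon applying $U_s^{-1}$ on the right. I would carry out this chain of identities first on a dense invariant subspace of Schwartz functions (tensored with $\C_m$), where all manipulations are justified termwise and the image under $e^{\Delta_0/2}$ consists of entire real-analytic functions to which the series defining $M_s$ converges absolutely; density then allows the closure of $\hat x_0 + i\hat p_0$ to be transported by $U_s$.

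The step I expect to require most care is not the algebra but the functional-analytic side: $\hat x_0 + i\hat p_0$ is unbounded, so one has to identify the appropriate dense domain (e.g.\ $\mathcal{S}(\R)\otimes\C_m$), verify that $U_s$ maps it into a domain on which multiplication by the unbounded paravector $x_0+\un x$ is defined, and confirm that the resulting operator identity extends to the natural maximal domain inside ${\mathcal H}_s$. Once this is set up, the computation above is essentially a two-line BCH argument plus a power-series manipulation.
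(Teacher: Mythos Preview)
Your argument is correct and follows essentially the same route as the paper: factor $U_s = M_s \circ e^{\Delta_0/2}$ and reduce the claim to the heat-kernel intertwining $e^{\Delta_0/2}(\hat x_0 - \partial_{x_0}) = \hat x_0\,e^{\Delta_0/2}$. The only difference is that the paper outsources this last identity to \cite{Ha2} and handles the $M_s$ step by invoking injectivity of the slice monogenic extension (both sides of \eqref{scstacts} are slice monogenic, so it suffices to check equality at $\un x = 0$), whereas you supply both computations explicitly via BCH and the power-series formula \eqref{ee-sme}; this makes your version self-contained but is not a different strategy.
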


\begin{proof}
We have $U_s = M_s \circ e^{\frac{\Delta_0}{2}}$. From the injectivity of the slice monogenic extension map $M_s$, 
(\ref{scstacts}) is equivalent to
$$
\left(e^{\frac{\Delta_0}{2}} \circ (x_0-\frac{d}{dx_0})\circ e^{-\frac{\Delta_0}{2}}\right) (f) (x_0)= x_0f(x_0).
$$
This follows from Theorem 6.3 of \cite{Ha2}.
\end{proof}

For the axial monogenic coherent state transform defining the axial monogenic representation, on the other hand, we have a more complicated representation of $x_0+ip_0$
involving the Cauchy-Kowalesky 
extension of the polynomials $\un x^j, j\in \N_0$.

Recall, from Theorem 2.2.1 of \cite{DSS}, that the Cauchy-Kowalesky extension of $\un x^j$ is given by the polynomial
$X^{(j)}_0(x_0,\un x)$, such that $X^{(j)}_0(0,\un x)= \un x^j$, where
$$
X^{(j)}_0(x_0,\un x)= CK(\un x^j) = \mu_0^j |x|^j \left(C_j^{(m-1)/2}\left(\frac{x_0}{|x|}\right) + \frac{m-1}{m+j-1} C_{j-1}^{(m+1)/2}
\left(\frac{x_0}{|x|}\right)\frac{\un x}{|x|}\right),
$$
with 
$$
\mu_0^{2j} = (-1)^j (C_{2j}^{(m-1)/2}(0))^{-1}, \, \, \mu_0^{2j+1}=(-1)^j \frac{m+2j}{m-1} (C_{2j}^{(m+1)/2}(0))^{-1}
$$
and the Gegenbauer polynomials  
$$
C_j^\nu(y) = \sum_{i=0}^{[j/2]} \frac{(-1)^i (\nu)_{j-i}}{i! (j-2i)!}(2y)^{j-2i},
$$
where $(\nu)_j=\nu (\nu+1) \cdots (\nu + j-1)$.

\begin{proposition} 
\label{th-42} Let $f\in {\mathcal H}_a$ be given by
\be\label{expansion}
f(x_0,\un x) = \sum_{i=0}^\infty X_0^{(i)}(x_0,\un x)\, f_{i} .  
\ee
The observable $x_0+ip_0$ is represented in the axial monogenic representation
by the following operator
\be
\label{ee-th41}
\left( U_a \circ (\hat x_0 + i \hat p_0) \circ U_a^{-1}\right) \left(f\right)(x_0,\un x) 
=  \sum_{i=0}^\infty \left(
\frac{2i+1}{2i+m} X_0^{(2i+1)}(x_0,\un x)f_{2i} + 
X_0^{(2i+2)}(x_0,\un x)f_{2i+1} \right)  
. 
\ee
\end{proposition}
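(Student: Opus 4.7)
My plan is to use the factorization $U_a = M_a \circ e^{\Delta_0/2}$ together with the identity invoked in the proof of Proposition \ref{th-scts}, namely that $e^{\Delta_0/2}\circ(\hat x_0 + i\hat p_0)\circ e^{-\Delta_0/2}$ is the operator of multiplication by $x_0$. Combined, these yield that $U_a\circ(\hat x_0+i\hat p_0)\circ U_a^{-1}$ sends $M_a(h)\mapsto M_a(x_0 h)$. Hence for any $f = M_a(h)\in \mathcal H_a$, the task reduces to expanding $M_a(x_0 h)$ in the basis $\{X_0^{(k)}\}$.

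The next step is to observe that, in each Clifford component, the space of axial monogenic polynomials of degree $\leq k$ is $(k+1)$-dimensional, being spanned by $\{M_a(x_0^j)\}_{j=0}^k$ via uniqueness of the Cauchy--Kowalewski extension from the $x_0$-axis. Hence the subspace of polynomials of degree exactly $k$ is one-dimensional, and since both $X_0^{(k)}$ and $M_a(x_0^k)$ lie in it, they must be proportional:
\[
X_0^{(k)} = D_k\, M_a(x_0^k), \qquad D_k := X_0^{(k)}(x_0,0)/x_0^k.
\]
Applying $M_a^{-1}$ (restriction to the $x_0$-axis) to the expansion (\ref{expansion}) gives $M_a^{-1}(f)(x_0) = \sum_i D_i x_0^i f_i$, and therefore
\[
\bigl(U_a \circ (\hat x_0 + i\hat p_0) \circ U_a^{-1}\bigr)(f) \;=\; M_a\Bigl(\sum_i D_i\, x_0^{i+1} f_i\Bigr) \;=\; \sum_i \frac{D_i}{D_{i+1}}\,X_0^{(i+1)}\, f_i.
\]

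It remains to identify the ratios $D_i/D_{i+1}$. Monogenicity of $X_0^{(k)}$ yields $\pa_{x_0}X_0^{(k)}|_{x_0=0} = -\pa_{\un x}(\un x^k)$, and a short calculation using $e_j e_i + e_i e_j = -2\delta_{ij}$ produces
\[
\pa_{\un x}(\un x^k) = -a_k\,\un x^{k-1}, \qquad a_{2j} = 2j,\quad a_{2j+1} = m+2j.
\]
By the same one-dimensionality argument as above, this upgrades to the recurrence $\pa_{x_0}X_0^{(k)} = a_k\, X_0^{(k-1)}$ on all of $\R^{m+1}$, which at $\un x = 0$ reads $kD_k = a_k D_{k-1}$. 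Hence $D_i/D_{i+1} = (i+1)/a_{i+1}$, which equals $(2i+1)/(m+2i)$ when $i$ is even and $1$ when $i$ is odd; splitting the sum over $i$ by parity then produces exactly (\ref{ee-th41}). The main obstacle is the parity-dependent formula $\pa_{\un x}(\un x^k) = -a_k\un x^{k-1}$, whose two cases arise from the fact that $\un x^k$ is scalar for $k$ even and vector-valued for $k$ odd, forcing the Dirac operator to produce different coefficients in the two regimes.
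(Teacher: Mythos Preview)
Your argument is correct and takes a genuinely different route from the paper. The paper factors through the \emph{slice} monogenic side: from $U_a=\check R\circ U_s$ and Proposition~\ref{th-scts} it gets $U_a\circ(\hat x_0+i\hat p_0)\circ U_a^{-1}=\check R\circ(x_0+\un x)\circ\check R^{-1}$, and then simply quotes equations (22)--(23) of \cite{CLSS} for the action of this conjugated multiplication on the basis $X_0^{(k)}$. You instead factor through the \emph{real line}: from $U_a=M_a\circ e^{\Delta_0/2}$ and the same heat-kernel identity you get $U_a\circ(\hat x_0+i\hat p_0)\circ U_a^{-1}=M_a\circ x_0\circ M_a^{-1}$, and then compute the transition coefficients $D_i/D_{i+1}$ from scratch via the recurrence $\partial_{x_0}X_0^{(k)}=a_k X_0^{(k-1)}$, which you derive from monogenicity and the parity-dependent formula $\partial_{\un x}(\un x^k)=-a_k\un x^{k-1}$.

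What each buys: the paper's proof is shorter but outsources the key computation to \cite{CLSS}; your proof is self-contained and in effect re-derives those cited identities by an elementary argument using only uniqueness of the axial Cauchy--Kowalewski extension and a direct Dirac-operator calculation. Your one-dimensionality step (homogeneous axial monogenic polynomials of degree $k$ form a line, since restriction to the $x_0$-axis is injective and lands in $\C\cdot x_0^k$) is the clean substitute for the Radon-transform machinery.
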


\begin{proof}
From Theorem 3.4 of \cite{CLSS}, any entire axial monogenic function has an expansion of the form 
(\ref{expansion}). On the other hand, from equations (22) and (23) of \cite{CLSS} we obtain
\bas
\left( \check R \circ (x_0 + \un x) \circ {\check R}^{-1}\right)\left(X_0^{2j}\right) &=& \frac{2j+1}{2j+m} \, X_0^{2j+1}
\\
 \,\left( \check R \circ (x_0 + \un x) \circ {\check R}^{-1}\right)\left(X_0^{2j+1}\right) &=&  \, X_0^{2j+2} \, , \qquad   j \in \N_0 .
\eas
These identities, together with the Proposition \ref{th-scts} and the fact that $U_a=\check R \circ U_s$
prove (\ref{ee-th41}).
\end{proof} 

\begin{remark}
On the axial monogenic representation, one does not expect to have operators of multiplication by nontrivial functions
as the product of monogenic functions is in general not monogenic. The axial monogenic 
representation of $x_0+ip_0$ given by (\ref{ee-th41}) is in a sense the closest one can get to such an operator 
as it maps the monogenic polynomial of order $k$,  $X_0^k = CK(\un x^k)$, 
to a scalar times the monogenic polynomial of order $k+1$, $X_0^{k+1}$.

\end{remark}

\bigskip
\bigskip

{\bf \large{Acknowledgements:}} The authors would like to thank the referee for several suggestions 
and corrections. The authors were partially
supported by Macau Government FDCT through the project 099/2014/A2,
{\it Two related topics in Clifford analysis}. The authors  
JM and JPN were also partly supported by
FCT/Portugal through the projects UID/MAT/04459/2013, 
EXCL/MAT-GEO/0222/2012, PTDC/MAT-GEO/3319/2014. JM was also partially supported by the 
Emerging Field Project on Quantum Geometry from Erlangen--N\"urnberg University.


\begin{thebibliography}{FMMMM}

\bibitem[Ba]{Ba} V. Bargmann, \emph{On a Hilbert space of analytic functions
and an associated integral transform, Part I}, Comm. Pure Appl. Math. {\bf 14} (1961),
187--214.

\bibitem[BDS]{BDS}
F. Brackx, R. Delanghe and F. Sommen, \emph{Clifford analysis}, Research Notes
in Mathematics, {\bf 76}, Pitman, Boston, 1982.

\bibitem[CLSS]{CLSS}
F. Colombo, R. Lavicka, I. Sabadini and V. Soucek,
\emph{The Radon transform between monogenic and generalized slice monogenic functions}, Math. Ann. 
{\bf 363} (2015), 733--752.

\bibitem[CoSaSt1]{CSS1}
F. Colombo, I. Sabadini and D.C. Struppa, 
\emph{Slice monogenic functions,} Israel J. Math. {\bf 171} (2009), 385--403.


\bibitem[CoSaSt2]{CSS2}
F. Colombo, I. Sabadini and D.C. Struppa, \emph{Noncommutative Functional
Calculus}, Birkh\"auser, 2011.


\bibitem[CoSaSt3]{CSS2h}
F. Colombo, I. Sabadini and D.C. Struppa, \emph{An extension theorem for slice
monogenic functions and some of its consequences}, Israel J. Math. {\bf 177} (2010),
369--389.


\bibitem[CoSaSo]{CSS3}
F. Colombo, I. Sabadini and F. Sommen, \emph{The inverse Fueter mapping theorem}, Comm. Pure Appl. Anal. {\bf 10} (2011), 1165--1181.


\bibitem[DS]{DS} N. De Schepper and F. Sommen,   \emph{Cauchy--Kowalevski extensions and monogenic
plane waves using spherical monogenics,} Bull. Braz. Math. Soc. {\bf 44} (2013),  321--350.

\bibitem[DSS]{DSS}
 R. Delanghe, F. Sommen and V. Soucek, \emph{Clifford algebra and spinor--valued functions},
Mathematics and its Applications, {\bf 53},  Kluwer, 1992.

\bibitem[Dr]{Dr} B.~Driver, \emph{On the Kakutani-It\^o-Segal-Gross and Segal-Bargmann-Hall isomorphisms},
J. Funct. Anal. \textbf{133} (1995), 69--128.

\bibitem[F]{F} R.~Fueter, \emph{Die Funktionentheorie der Differetialgleichungen $\Delta u = 0$ und $\Delta \Delta u 
= 0$ mit vier reellen Variablen}, Comm. Math. Helv. \textbf{7} (1935), 307--330.

\bibitem[Ha1]{Ha1}
B.~C. Hall, \emph{The {S}egal-{B}argmann ``coherent-state'' transform for Lie
  groups}, J. Funct. Anal. \textbf{122} (1994), 103--151.

\bibitem[Ha2]{Ha2}
B.~C. Hall, \emph{Holomorphic methods in analysis and mathematical physics},  First Summer School
in Analysis and Mathematical Physics (Cuernavaca Morelos, 1998). Contemp. Math., 260:1--59,
2000.

\bibitem[He]{He} S.~Helgason, \emph{Integral geometry and Radon transforms}, Springer-Verlag New York, 2011.

\bibitem[KQS]{KQS} K.I. Kou, T.  Qian and F.  Sommen, 
\emph{Generalizations of Fueter's theorem.} Methods Appl. Anal.
\textbf{9} (2002),  273--289.

\bibitem[PQS]{PQS} D.~Pe\~na Pe\~na, T.~Qian and F.~Sommen, 
\emph{An alternative proof of Fueter's theorem},
Complex Var. Elliptic Equ. \textbf{51} (2006) 913--922.

\bibitem[Q]{Q} T.~Qian, \emph{Generalization of Fueter's result to  $\R^{n+1}$.}
Atti Accad. Naz. Lincei Cl. Sci. Fis. Mat. Natur. Rend. Lincei (9) Mat. Appl. \textbf{8} (1997), 111--117.

\bibitem[Sc]{Sc} M.~Sce, \emph{Osservazioni sulle serie di potenze nei moduli quadratici}, Atti Acc. Lincei Rend.
fis., s.8, \textbf{23} (1957) 220--225.

\bibitem[Se1]{Se1} I.~Segal, \emph{ Mathematical characterization of the physical vacuum for a linear Bose-Einstein
field}, Illinois J. Math. {\bf 6} (1962), 500--523.

\bibitem[Se2]{Se2} I.~Segal, \emph{The complex wave representation of the free Boson field}, in 
``Topics in functional
analysis: Essays dedicated to M.G. Krein on the occasion of his 70th birthday'', I. Gohberg
and M. Kac, Eds, Advances in Mathematics Supplementary Studies, Vol. 3, pp. 321--343.
Academic Press, New York, 1978.

\end{thebibliography}
\end{document}